\newcounter{mnotecount}[section]
\renewcommand{\themnotecount}{\thesection.\arabic{mnotecount}}
\newcommand{\mnote}[1]
{\protect{\stepcounter{mnotecount}}$^{\mbox{\footnotesize  $
      \bullet$\themnotecount}}$ \marginpar{\raggedright\tiny
    $\!\!\!\!\!\!\,\bullet$\themnotecount: #1} }
\def\lin{\mathop{\rm span}}
\newcommand{\bw}{{\mathbf w}}
\newcommand{\bmd}{{\mathbf m \mathbf d}}
\newcommand{\bm}{{\mathbf m}}
\newcommand{\bd}{{\mathbf d}}
\newcommand{\wk}{\widehat{k}\mbox{}}
\newcommand{\conff}{\psi}
\newcommand{\der}{\wr\wr}
\newcommand{\dd}{\,\mathrm{d}}
\newcommand{\real}{\mathbb {R}}
\newcommand{\eps}{\varepsilon}
\newcommand{\thet}{\vartheta}
\newcommand{\vphi}{\varphi}
\newcommand{\gdwa}{\tilde{g}\mbox{}}
\newcommand{\const}{\mathrm{const}}
\newcommand{\zg}{{\mathring g}}
\newcommand{\zk}{{\mathring k}}
\newcommand{\zka}{{\mathring \kappa}}
\newcommand{\wx}{\widehat{x}\mbox{}}
\newcommand{\wg}{\widehat{g}\mbox{}}
\newcommand{\dtwo}{\Delta_\sigma}
\newtheorem{theorem}{Theorem}
\theoremstyle{definition}
\newtheorem{definition}{Definition}
\newtheorem{remark}{Remark}
\newtheorem{example}{Example}
\newcommand{\ee}{\end{equation}}
\newcommand{\eq}[1]{(\ref{#1})}
\newcommand{\be}{\begin{equation}}
\newcommand{\bel}[1]{\begin{equation}\label{#1}}
\newcommand{\arxiv}[1]{\url{http://arxiv.org/abs/#1}}
\begin{document}



\title{Rigid spheres in Riemannian spaces}
\author{%
{ Hans-Peter Gittel$^{1}$, }
Jacek Jezierski$^{2}$,
\\ Jerzy Kijowski$^{3}$, 
Szymon Łęski$^{3,4}$.
\\ %
\\{}$^{1}$Department of Mathematics, University of Leipzig, \\
Augustusplatz 10, 04109 Leipzig, Germany\\
\\ {}$^{2}$Department of Mathematical Methods in Physics\\
Faculty of Physics, University of Warsaw,\\
ul. Hoża 69, Warszawa, Poland \\
\\
{}$^{3}$Center for Theoretical Physics, Polish Academy of Sciences \\
al. Lotników 32/46, Warszawa, Poland
\\
\\
{}$^{4}$Nencki Institute of Experimental Biology,\\ Polish Academy of Sciences\\
ul. Pasteura 3, Warszawa, Poland}
\maketitle
\begin{abstract}

We define a special family of topological two-spheres, which we call ``rigid spheres'', and prove that there is a four-parameter family of rigid spheres in a generic Riemannian three-manifold whose metric is sufficiently close to the flat metric (e.~g.~in the external region of an asymptotically flat space). In case of the flat Euclidean three-space these four parameters are: 3 coordinates of the center and the radius of the sphere. The rigid spheres can be used as building blocks for various (``spherical'', ``bispherical'' etc.) foliations of the Cauchy space. This way a supertranslation ambiguity may be avoided. Generalization to the full 4D case is discussed. Our results generalize both the Huang foliations (cf. \cite{LHH}) and the foliations used by us (cf. \cite{JKL}) in the analysis of the two-body problem.

\end{abstract}

\section{Introduction}

In General Relativity Theory, the amount of gravitational energy (mass) contained in a portion $V \subset \Sigma$ of a Cauchy three-surface $\Sigma$ is assigned to its boundary $S=\partial V$, rather than to the volume $V$ itself (cf.~the notion of a ``quasi-local'' mass introduced by Penrose, \cite{QLM}). The above philosophy was also used in \cite{pieszy}, where important quasi-local observables (like, e.g., momentum, angular momentum or center of mass) assigned to a generic 2D surface (whose topology is that of $S^2$) have acquired a hamiltonian interpretation as generators of the corresponding canonical transformations of the (appropriately defined) phase space of gravitational initial data. Recently, we were able to define energy contained in an asymptotically Schwarzschild-de Sitter spacetime (cf.~\cite{CJK-2013}), and again the quasi-local, hamiltonian description of the field dynamics provided an adequate starting point for our analysis.

Typically, the $S^2$-spheres used for the quasi-local purposes come from specific spacetime foliations $\{ t=\ $const.; $r=\ $const.$\}$, where a specific choice of coordinates $t$ and $r$ plays the role of a gauge. In literature, gauge conditions based on 3D-elliptic problems have been mostly used (see e.g.~``traceless-transversal'' condition advocated by J.~York (see e.g. \cite{YorkTT})
or a ``p-harmonic gauge'' analyzed in \cite{positivity}). Important results have been obtained by Huisken and Ilmanen (cf.~\cite{inverse-mean-flow}) who used a parabolic gauge condition imposed for the radial coordinate $r$. The same gauge was also used by one of us (J.J., see \cite{JJimcf}) to prove stability of the Reissner--Nordstr\"{o}m
solution, together with a version of Penrose's inequality.


For purposes of the quasi-local analysis, these approaches exhibit an obvious drawback consisting in the fact that we do not control intrinsic properties of the surfaces $\{ r = \mbox{\rm const.} \}$ constructed this way. This feature was partially removed by Huang in \cite{LHH}, where new 3D foliations were thoroughly analyzed. Their fibers $\{ r = \mbox{\rm const.} \}$ are selected by a 2D-elliptic condition: $k=$\ const., where $k$ denotes the mean extrinsic curvature. In a generic Riemannian three-manifold $\Sigma$, the above equation admits a one-parameter family of ``spheres''. Physically, they are related to the ``center of mass'' of the geometry (cf. \cite{LHH}).

Unfortunately, the above condition is not stable with respect to small perturbations of the geometry. Indeed, in the (flat) Euclidean space $E^3$, this condition admits not ,,one-'' but a four-parameter family of solutions (parameterized e.g.~by the radius $R$ and the three coordinates of a center). Moreover, the exclusive use of the center of mass reference frame is often too restrictive for physical applications. In particular, it does not allow us to describe easily the momentum -- i.e.~the generator of space translations.

In the present paper we propose a new gauge condition, which is also 2D-elliptic but does not exhibit the above drawback. Indeed, in a generic Riemannian three-manifold our condition selects a four-parameter family of solutions, like in the Euclidean space $E^3$. Moreover, our condition is weaker than  ``$k=$\ const.'' (equivalent in the non-generic, Euclidean case, only). Topological two-spheres satisfying our condition will be called
``rigid spheres''. They can be organized in topologically different ways: not necessarily standard ``nested spheres foliations'', but also e.g.~``bispherical foliations'' which already proved to be very useful in the analysis of the two body problem\footnote{Initial data for the two black holes system can be easily obtained from the flat Euclidean geometry $E^3$ by two ``punctures''. Such a space admits the ``$k=$\ const.'' foliation only in the external region, far away from the two bodies. On the contrary, our ``rigid spheres'' can be organized into a ``bispherical system of coordinates'' which covers nicely the entire exterior of the two horizons.} (see \cite{JKL}). We expect that various such arrangements, with rigid spheres used as building blocks, will provide useful gauge conditions in General Relativity Theory.

The present paper is a part of a bigger project, where we construct ``spheres'' which are rigid not only with respect to 3D, but also with respect to 4D deformations. More precisely, an eight-parameter family of similar ``rigid spheres'' will be constructed
in a generic four-dimensional Lorentzian spacetime.  In the present
paper we limit ourselves to the 3D Riemannian case. It turns out, however, that our construction can be generalized to the
entire pseudo-Riemannian spacetime $M$, instead of the Riemannian
Cauchy three-space $\Sigma \subset M$. The idea of this extension
is to mimic the case of the flat Minkowski space, where all
possible round spheres, embedded in all possible flat subspaces
$\Sigma$ of $M$, form an eight-parameter family. All of them can be obtained from a single one by the action of the product of the one-parameter group of dilations (changing
the size of $S$) and the ten-parameter Poincar\'e group,
quotiented by the three-parameter rotation group.
The 4D version of our construction will take into account not only the external curvature of $S$, but
also its torsion (in Section 2.5 we give a short outline of this construction, which will be presented in detail in a subsequent paper). The rigid spheres obtained this way will form an eight-parameter family and will be used
to construct useful coordinate systems not only on a given Cauchy
surface $\Sigma$, but also in the entire spacetime. The main advantage of such a construction consists in its rigidity at infinity. We very much hope to be able to eliminate supertranslations and to reduce the symmetry group of the ``Scri'', otherwise infinite dimensional, to the finite dimensional one.

The construction which we propose in the present paper is based on the following idea. Given a surface $S$ satisfying the rigid sphere condition, consider its infinitesimal deformations. They may be parameterized by sections of the normal bundle $T^\perp S$. If we want our condition to admit a four-parameter family of solutions, like in the flat case, its linearization must admit a four-parameter family of deformations. This means that we are not allowed to constrain the complete information about the mean curvature $k$: four real parameters describing $k$ must be left free. In the flat case these four parameters which have to be left free are: the mean value (or the {\em monopole part}) of $k$, which is responsible for the size of $S$, and its {\em dipole part} (which vanishes exceptionally in flat case due to Gauss-Codazzi equations). The {\em dipole part} of the deformation is related to the group of translations. In fact, possible motions of a metric sphere are described by the group of Euclidean motions, quotiented by the subgroup of rotations which form the group of internal symmetries of every particular sphere $S$.

To implement the above idea in a non flat case, an intrinsic, geometric notion of a multipole expansion on an arbitrary Riemannian, topologically $S^2$-surface is proposed in Section 2. This construction is our main technical tool and we very much believe in its universal validity, going far beyond the purposes of the present paper. Section 2 is completed with the definition of a rigid sphere.

Section 3 contains formulation and the proof of theorem 3: a generic Riemannian three-space admits a four-parameter family of rigid spheres. Our proof is relatively simple, but is valid in the ``weak field region'' only. This is sufficient for purposes of the quasi-local analysis of gravitational energy (in fact, the idea originates from our analysis of interaction between two black holes, cf.~\cite{JKL}). Further development concerning strong fields will be given elsewhere.

Finally, discussion concerning less known (but necessary) technical results, like specific spectral properties of the Laplace operator on $S^2$ or the second variation of area, has been shifted to the Appendix.

\section{Equilibrated spherical coordinates. Mul\-ti\-pole calculus on distorted spheres}
\subsection{Conformally spherical coordinates}

Let $S$ be a differential two-manifold, diffeomorphic to the two-sphere
$S^2\subset {\mathbb R}^3$ and equipped with a (sufficiently
smooth) metric $g$. Coordinates $(\thet, \varphi) = (x^A)$,
$A=1,2$, defined on a dense subset of $S \setminus \ell$, where
$\ell$ is topologically a line interval, will be called {\em
conformally spherical coordinates} if they have the same range of
values as the standard spherical coordinates on $S^2 \subset
{\mathbb R}^3$ and, moreover, if the corresponding metric tensor
$g_{AB}$ is conformally equivalent to the standard round metric on
$S^2$, i.e. the following formula holds:
\begin{equation}\label{conf0}
    g_{AB} = \conff \cdot \sigma_{AB} \ ,
\end{equation}
where $\conff$ is a (sufficiently smooth) function on $S$ and
\begin{equation}\label{eta}
    \sigma_{AB} =
    \left( \begin{array}{cc}
     1 &  0  \\   0 &  \sin^2 \thet
    \end{array} \right) \ .
\end{equation}

\begin{remark} Conformally spherical coordinates always exist (cf. \cite{isocoord}).
It is easy to see that there is
always a six-parameter freedom in the choice of such coordinates.
More precisely, if $(\thet, \varphi )$ are conformally spherical
coordinates then $(\widetilde\thet, \widetilde\varphi )$ are also
conformally spherical if and only if they may be obtained from $(\thet, \varphi )$ {\em via} a conformal transformation of
$S^2\subset {\mathbb R}^3$.
\end{remark}

\begin{example}{A ``proper'' conformal transformation,
i.e.~which is not a rotation:} Let ${\bf n} \in S$ and $\tau > 0$ be
a positive number. Using appropriate rotation, choose conformally
spherical coordinates $(\thet ,\varphi)$ in such a way that ${\bf
n}$ is a north pole, i.e.~the coordinate $\thet$ vanishes at
${\bf n}$. Define
\begin{equation}\label{F}
    F_{{\bf n},\tau}(\thet ,\varphi) =
    (\widetilde\thet ,\widetilde\varphi) \ ,
\end{equation}
where
\begin{equation}\label{alpha}
    \widetilde\thet := 2 \arctan \left(\tau \cdot \tan \frac
    \thet 2 \right), \ \ \ \ \ \ \ \ \
    \widetilde\varphi := \varphi \ ,
\end{equation}
or, equivalently,
\begin{equation}\label{theta-half}
    \tan \frac {\widetilde\thet} 2 =
    \tau \cdot \tan \frac \thet 2 \ .
\end{equation}

For the fixed point ${\bf n}$ these transformations form a
one-parameter group\footnote{In stereographic coordinates calculated with respect to the south pole this group is the dilatation group: $\zeta \rightarrow \tau \zeta$.}:
\begin{equation}\label{group}
    F_{{\bf n},\tau}\circ
    F_{{\bf n},\sigma} =
    F_{{\bf n},\tau\sigma} \ ,
\end{equation}
generated by the vector field:
\begin{equation}\label{field}
    \left. \frac {\mbox{\rm d}}{\mbox{\rm d} t}\right|_{t=1}
    F_{{\bf n},t}(\thet ,\varphi) =
    \left. \frac {\mbox{\rm d}}{\mbox{\rm d} t}\right|_{t=1}
    \left[2 \arctan \left( t \cdot \tan \frac \thet 2 \right)
    \right] \frac {\partial}{\partial\thet}=
    \sin \thet \frac {\partial}{\partial\thet} \ ,
\end{equation}
which is the (minus) gradient of the function $z = \cos \thet$.

In particular, $F_{{\bf n},1}= {\mathbb I}$ (the identity map)
for every ${\bf n}$. Moreover, equation \eqref{theta-half} implies the following identity:
\begin{equation}\label{identity}
    F_{{\bf - n},\tau} = F_{{\bf n},\frac 1\tau} \ .
\end{equation}

Using (\ref{alpha}) and (\ref{theta-half}) we may easily derive
the following formula:
\begin{equation}\label{dtheta}
    \mbox{\rm d} \thet = \frac {\mbox{\rm d} \thet}
    {\mbox{\rm d} {\widetilde\thet}} \ \
    {\mbox{\rm d} {\widetilde\thet}}=
    \tau \frac {1 + \tan^2 \frac {\widetilde\thet} 2}{\tau^2
    + \tan^2 \frac {\widetilde\thet} 2}
    \ \mbox{\rm d} {\widetilde\thet} \ .
\end{equation}
Similarly, we may prove:
\begin{eqnarray}
  \sin \thet  &=& \frac {\sin\thet}{\sin{\widetilde\thet}}
  \ \ \sin {\widetilde\thet}
  \nonumber \\
   &=& \frac {2\frac 1\tau \tan \frac {\widetilde\thet} 2}
   {1 + \frac 1{\tau^2}\tan^2 \frac {\widetilde\thet} 2}
  \cdot \frac{1 + \tan^2
  \frac {\widetilde\thet} 2}{2\tan \frac {\widetilde\thet} 2}
  \ \ \sin {\widetilde\thet}
  =\tau \frac {1 + \tan^2 \frac {\widetilde\thet} 2}
  {\tau^2 + \tan^2 \frac {\widetilde\thet} 2}
  \ \ \sin{\widetilde\thet} \ . \label{sin}
\end{eqnarray}
As a conclusion we obtain:
\begin{equation}\label{conf}
    (\mbox{\rm d} \thet )^2 + \sin^2 \thet (\mbox{\rm d}\varphi)^2
    = h^2 \left[
    (\mbox{\rm d} \widetilde\thet )^2 +
    \sin^2 \widetilde\thet (\mbox{\rm d}\varphi)^2
    \right] \ ,
\end{equation}
where
\begin{equation}\label{f-conf}
    h = \tau \frac {1 + \tan^2 \frac {\widetilde\thet} 2}
  {\tau^2 + \tan^2 \frac {\widetilde\thet} 2} \ ,
\end{equation}
which proves the conformal character of the transformation.
Indeed, we have
\begin{equation}\label{conf4}
    g_{AB}\mbox{\rm d}x^A \mbox{\rm d} x^B =
    \conff \left[
    (\mbox{\rm d} \thet )^2 + \sin^2 \thet (\mbox{\rm d}\varphi)^2
    \right] =
    \conff h^2 \left[
    (\mbox{\rm d} {\widetilde\thet} )^2
    + \sin^2 {\widetilde\thet} (\mbox{\rm d}\varphi)^2
    \right] \ .
\end{equation}
Hence, $({\widetilde\thet},\varphi)$ are conformally spherical
coordinates if $(\thet , \varphi )$ were.
\end{example}

\subsection{Barycenter of a conformally spherical system}

Given a system of conformally spherical coordinates on $S$,
consider the corresponding three functions:
\begin{eqnarray}
  x &:=& \sin \thet \cos\varphi \ , \label{x}\\
  y &:=& \sin \thet \sin\varphi \ , \label{y}\\
  z &:=& \cos \thet \label{z} \ .
\end{eqnarray}
We have, therefore, a mapping ${\mathbf D}\!: \, ]0,\pi[\times]0,2\pi[ \, \mapsto {\mathbb R}^3$, given by:
\begin{equation}\label{dipolD}
{\mathbf D}(\thet,\varphi)=
\left(\begin{array}{c}
D^1(\thet,\varphi)\\ D^2(\thet,\varphi)\\ D^3(\thet,\varphi)
\end{array}\right) =
\left(\begin{array}{c}
x \\
y \\
z \end{array}\right) \ .
\end{equation}
The following vector
\begin{equation}\label{X}
    {\bf X} =
    \left( \begin{array}{c}
     < x >  \\   < y > \\ < z >
    \end{array} \right) \in {\mathbb R}^3 \ ,
\end{equation}
where by $< f >$ we denote the average (mean value) of the
function $f$ on $S$, i.e. the number
\begin{equation}\label{average}
    <f> := \frac{\int_S f \sqrt{\det g}\ \dd^2x}
    {\int_S \sqrt{\det g}\ \dd^2 x}
 \ ,
\end{equation}
will be called
a ``barycenter'' of the system $(\thet,\varphi )$ on $S$.
Of course, we have $\| {\bf X} \| \leq 1$, because of the H\"older inequality:
\[
   \| {\bf X} \|^2 \, =  \, <x>^2 + <y>^2 + <z>^2 \ \le \ <x^2> + <y^2> + <z^2> \,=\, 1 \ .
\]

\begin{example} Consider the proper conformal transformation
(\ref{alpha}) and calculate the new barycenter
\begin{equation}\label{tildeX}
    \widetilde{\bf X} =
    \left( \begin{array}{c}
     < \widetilde x >  \\   < \widetilde y > \\ <\widetilde z >
    \end{array} \right) \in {\mathbb R}^3 \ ,
\end{equation}
where
\begin{eqnarray}
  \widetilde x &:=& \sin \widetilde\thet \cos\varphi \ , \nonumber\\
  \widetilde y &:=& \sin \widetilde\thet \sin\varphi \ , \nonumber\\
  \widetilde z &:=& \cos \widetilde\thet \nonumber \ .
\end{eqnarray}
The trigonometric identity:
\begin{equation}\label{cos}
    \cos \thet = \frac {
    1 - \tan^2 \frac {\thet} 2}
    {1 + \tan^2 \frac {\thet} 2} \ ,
\end{equation}
implies:
\begin{equation}\label{tg^2}
    \tan^2 \frac {\thet} 2 = \frac {1 - \cos\thet}
    {1 + \cos\thet} =\frac {1 - z}{1 + z} \ .
\end{equation}
Hence, formula (\ref{theta-half}) implies:
\begin{equation}\label{tg-3}
    \frac {1 - \widetilde z} {1 + \widetilde z}=
    \tau^2 \frac {1 - z} {1 + z} \ ,
\end{equation}
or, equivalently,
\begin{equation}\label{z-tilde}
    \widetilde z = \frac
    {1 + z - \tau^2 (1-z)}
    {1 + z + \tau^2 (1-z)}
    \ .
\end{equation}
Moreover, formula (\ref{sin}) and its inverse:
\begin{equation}\label{sin-inv}
    \sin {\widetilde\thet}
  =\tau \frac {1 + \tan^2 \frac {\thet} 2}
  {1 + \tau^2 \tan^2 \frac {\thet} 2}
  \,  \sin{\thet}
  = \tau \frac {1 + \frac {1 - z}{1 + z}}
  {1 + \tau^2 \frac {1 - z}{1 + z}}
  \,  \sin{\thet}
  = \frac {2 \tau \sin\thet}{1 + z + \tau^2 (1-z)}\ ,
\end{equation}
give 
\begin{eqnarray}
  \widetilde x &:=& \frac {2 \tau}{1 + z + \tau^2 (1-z)}
  \,  x \ , \label{x-tilde}\\
  \widetilde y &:=& \frac {2 \tau}{1 + z + \tau^2 (1-z)}
  \, y  \label{y-tilde} \ .
\end{eqnarray}
To calculate mean values of the functions (\ref{x-tilde}),
(\ref{y-tilde}) and (\ref{z-tilde}) we do not need to pass to new
coordinates $(\widetilde\thet ,\varphi)$, but we may use, as
well, old coordinates $(\thet ,\varphi)$. But we see that for
$\tau \rightarrow 0$ we have $\widetilde x \rightarrow 0$,
$\widetilde y \rightarrow 0$, $\widetilde z \rightarrow 1$. The
Lebesgue theorem implies, therefore, that for $\tau \rightarrow 0$
we have
\begin{equation}\label{X-limit}
    \widetilde{\bf X} =
    \left( \begin{array}{c}
     < \widetilde x >  \\   < \widetilde y > \\ <\widetilde z >
    \end{array} \right) \longrightarrow
    \left( \begin{array}{c}
     < 0 >  \\   < 0> \\ < 1 >
    \end{array} \right) = {\bf n} \ .
\end{equation}
\end{example}

\subsection{Equilibrated spherical coordinates}
\begin{definition}
Conformally spherical coordinate system $(\thet,\varphi)$ is
called {\em equilibrated}, if its barycenter vanishes: ${\bf X} =
0\in {\mathbb R}^3$.
\end{definition}

\begin{remark}
If there are two equilibrated spherical systems on
$S$ then they are related by a rotation.
\end{remark}

\begin{theorem}
 Each metric tensor on $S$ admits a unique (up to
rotations) equilibrated spherical system.
\end{theorem}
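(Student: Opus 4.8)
The plan is to recast the statement as a M\"obius balancing problem for a single measure on the round sphere. Fixing one conformally spherical system identifies $S$ with $S^2$ through the map ${\bf D}=(x,y,z)$ of \eqref{dipolD} and pushes the intrinsic area measure $\dd\mu_g=\sqrt{\det g}\,\dd^2x$ forward to a finite measure $\mu$ on $S^2$ with smooth, strictly positive density. By the Remark every other conformally spherical system arises by precomposing ${\bf D}$ with a conformal (M\"obius) transformation $\Phi$ of $S^2$, and since $\dd\mu_g$ is intrinsic the associated barycenter is the Euclidean center of mass $\frac{1}{\mu(S^2)}\int_{S^2}\Phi\,\dd\mu\in\real^3$ of the transformed measure; this is exactly the computation of the Example in Subsection~2.2, where the means \eqref{x-tilde}--\eqref{z-tilde} are taken against the unchanged measure. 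The theorem is therefore equivalent to the assertion that there is a M\"obius transformation, unique up to rotations, carrying the center of mass of $\mu$ to the origin.

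For existence I would parameterize the M\"obius transformations modulo rotations by the open unit ball $B^3$, sending the pair $({\bf n},\tau)$ of \eqref{alpha} to ${\bf a}=\frac{1-\tau}{1+\tau}{\bf n}$ (so that $\tau=1$, the identity, maps to the origin), and define $\Psi\colon B^3\to\real^3$ by letting $\Psi({\bf a})$ be the barycenter produced by the corresponding transformation $F_{{\bf n},\tau}$. The H\"older bound $\|{\bf X}\|\le1$ shows that the image lies in $\overline{B^3}$, and dominated convergence shows that $\Psi$ is continuous and, by the boundary computation \eqref{X-limit} ($\widetilde{\bf X}\to{\bf n}$ as $\tau\to0$), extends continuously to $\overline{B^3}$ with $\Psi|_{\partial B^3}=\mathrm{id}$. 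A continuous self-map of the closed ball that is the identity on the boundary must hit every interior point, for otherwise the radial retraction from an omitted point would retract $\overline{B^3}$ onto $\partial B^3$, contradicting the no-retraction theorem; in particular $0\in\Psi(B^3)$, which produces an equilibrated system.

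Uniqueness is the main obstacle, and I would settle it by realizing the balancing condition as the Euler--Lagrange equation of a strictly convex functional. Regard $B^3$ as the ball model of hyperbolic space $\mathbb{H}^3$ with ideal boundary $S^2$, and set $\mathcal E({\bf a})=\int_{S^2}\beta_\xi({\bf a})\,\dd\mu(\xi)$, where $\beta_\xi$ is the Busemann function of the boundary point $\xi$. Its hyperbolic gradient is the average of the unit vectors pointing away from $\xi$, and its vanishing is exactly the balancing condition $\Psi({\bf a})=0$ (this is the classical identification of the conformal center of mass with the Euclidean mean-zero point, as in Douady--Earle). Each $\beta_\xi$ is geodesically convex and is strictly convex in the directions transverse to the geodesics asymptotic to $\xi$, so because $\mu$ has full support the averaged functional $\mathcal E$ is strictly geodesically convex and proper; it therefore possesses a single critical point ${\bf a}_0$, the unique zero of $\Psi$. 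Finally, since a rotation $R$ sends a barycenter ${\bf X}$ to $R{\bf X}$ it preserves the condition ${\bf X}=0$, so every equilibrated system projects to the coset ${\bf a}_0$ and any two differ by a rotation, which reproves the Remark. The delicate point throughout is the strict convexity (equivalently, the nonsingularity of the derivative of the barycenter map): it is precisely here that the genuine spreading of $\mu$, guaranteed by the positive smooth density of an honest metric, enters, and a measure concentrated near a point would destroy it.
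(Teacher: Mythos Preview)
Your existence argument coincides with the paper's: both parametrize the proper conformal transformations by the closed unit ball, observe that the barycenter map is continuous and restricts to the identity on the boundary sphere (the paper uses ${\bf N}=(1-\tau){\bf n}$ where you use ${\bf a}=\frac{1-\tau}{1+\tau}{\bf n}$), and then invoke a no-retraction/degree argument to hit $0$.

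The uniqueness arguments differ genuinely. The paper stays elementary: given two equilibrated systems it writes the M\"obius map between them as ${\cal O}_1\circ F_{{\bf m},\tau}\circ{\cal O}_0$ with ${\cal O}_i$ rotations (formula \eqref{m}), aligns ${\bf m}$ with the $z$-axis, and shows by the explicit derivative \eqref{derivative} that $\tau\mapsto\langle z_\tau\rangle$ is strictly monotone; hence it cannot vanish for two distinct $\tau$, forcing $\tau=1$. Your route through the averaged Busemann functional on $\mathbb{H}^3$ is the Douady--Earle conformal-barycenter argument: it needs the hyperbolic machinery, but it delivers uniqueness in one stroke without the rotation--boost--rotation decomposition and shows, via strict convexity, that the barycenter map is actually a diffeomorphism rather than merely surjective. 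The paper's monotonicity computation is in effect the restriction of your strict convexity to the one-parameter geodesic through the origin towards ${\bf m}$, so the two arguments are close relatives; the paper's is self-contained calculus accessible without hyperbolic geometry, while yours situates the result in its natural conformal-group setting and makes the role of the full-support hypothesis (your ``delicate point'') transparent.
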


\begin{proof} Given a metric tensor $g$ on $S$, choose first any
system of conformally spherical coordinates $(\thet,\varphi)$ on
$S$ and consider the corresponding identification of its points
with the points of $S^2 = \partial K(0,1) \subset {\mathbb R}^3$.
Consider now the mapping
\begin{equation}\label{calF}
    {\mathbb R}^3 \supset K(0,1)\ni {\bf N} \rightarrow
    {\cal F}({\bf N})\in K(0,1) \subset {\mathbb R}^3 \ ,
\end{equation}
given for ${\bf N} \ne 0$ by the following formula
\begin{equation}\label{calF-1}
    {\cal F}({\bf N}):= \widetilde{\bf X}_{{\bf n},\tau} \ ,
\end{equation}
where the latter is the barycenter of the coordinates
$(\widetilde\thet,\widetilde\varphi)$ obtained from
$(\thet,\varphi)$ by the proper conformal transformation
(\ref{F}) with
\begin{equation}\label{n-tau}
    {\bf n} := \frac {\bf N}{\| {\bf N} \| }
\end{equation}
and
\begin{equation}\label{n-tau-1}
    \tau := 1 - \| {\bf N} \| \ .
\end{equation}
For ${\bf N}=0$ formula \eqref{n-tau} has no sense, but then \eqref{n-tau-1} gives $\tau = 1$ and, whence, equation \eqref{theta-half} implies that the corresponding transformation (\ref{F}) reduces to identity, no matter which vector ${\bf n}$ do we choose. Consequently, we define ${\cal F}({\bf 0})$ as the barycenter of the original coordinates $(\thet,\varphi)$.
Obviously, ${\cal F}$ defined this way is continuous. Moreover, for $\| {\bf N} \|
=1$ we have ${\cal F}({\bf N}) = {\bf N}$ due to \eqref{n-tau-1} and
\eqref{X-limit}. This means that ${\cal F}$ reduces to the identical mapping when
restricted to the boundary $S^2 = \partial K(0,1) \subset K(0,1)$. Consequently, there must be a point ${\bf N}_0$ which
solves equation ${\cal F}({\bf N}_0) = 0$. This completes the existence proof. To prove the uniqueness, let us suppose that there is
another solution: ${\cal F}({\bf N}_1) = 0$. Consider now the
conformal transformation $F_{{\bf n}_1,\tau_1} \circ F_{{\bf
n}_0,\tau_0}^{-1}$. Since the proper conformal transformations do
not form any subgroup of the group of all conformal
transformations, we cannot assume that it is again a proper
transformation. But it may be decomposed into a product of
rotations and a proper conformal transformation:
\begin{equation}\label{m}
    F_{{\bf n}_1,\tau_1} \circ F_{{\bf n}_0,\tau_0}^{-1} =
    {\cal O}_1 \circ F_{{\bf m},\tau} \circ {\cal O}_0 \ ,
\end{equation}
where ${\cal O}_1$ and ${\cal O}_0$ are rotations. Denote by
$(\thet_0,\varphi_0)$ the spherical coordinates obtained from
$(\thet,\varphi)$ by the transformation $F_{{\bf n}_0,\tau_0}$
and then rotation ${\cal O}^{-1}_0$. Similarly, denote by
$(\thet_1,\varphi_1)$ the ones obtained from $(\thet,\varphi)$
by $F_{{\bf n}_1,\tau_1}$ and then by rotation ${\cal O}^{-1}_1$.
Because a rotation does affect equilibration of coordinates, both
systems $(\thet_0,\varphi_0)$ and $(\thet_1,\varphi_1)$ are
equilibrated. But the latter may be obtained from the former by a
proper conformal transformation $F_{{\bf m},\tau}$. We shall prove
that this is impossible unless $\tau = 1$ or, equivalently,
transformation $F_{{\bf m},\tau}$ is trivial (identical). For this
purpose consider, for each value of $\tau$, the linear function
$z_\tau$. Without any loss of generality we may assume that
\begin{equation}\label{m-1}
    {\bf m} = \left(
    \begin{array}{c}
     0  \\   0 \\  1
    \end{array} \right)
\end{equation}
(if this is not the case, it is sufficient to perform an
appropriate rotation of coordinates). Formula (\ref{z-tilde})
implies the following relation:
\begin{equation}\label{z-par}
    z_\tau = \frac
    {1 + z_0 - \tau^2 (1-z_0)}
    {1 + z_0 + \tau^2 (1-z_0)}
    \ .
\end{equation}
Hence
\begin{equation}\label{derivative}
    \frac {\mbox{\rm d}}{\mbox{\rm d} \tau} z_\tau =
    \frac {- 4 \tau \left(1-z_0^2\right)}
    {\left[ 1 + z_0 + \tau^2 (1-z_0)\right]^2} \le 0
    \ ,
\end{equation}
and it vanishes only at a single point $z_0 = 1$. Consequently,
its mean value:
\begin{equation}\label{derivative-mean}
    < \frac {\mbox{\rm d}}{\mbox{\rm d} \tau} z_\tau >
\end{equation}
is strictly negative. This implies that starting from $\tau = 1$
(which corresponds to the identity mapping $F_{{\bf m},1}$) and
moving towards the actual value $\tau < 1$, the ``$z$''-component
of the vector $\widetilde{\bf X}_{{\bf m},\tau}$ is strictly
increasing. It vanishes at the beginning because
$(\thet_0,\varphi_0)$ is equilibrated. Hence, it must be strictly
positive at the end. This means that the final system
$(\thet_1,\varphi_1)$ cannot be equilibrated unless $\tau = 1$
and, therefore, both systems coincide.
\end{proof}

Different equilibrated spherical systems of coordinates form, therefore, a three-dimensional family. They can be parameterized by the position of a fixed point ${\bf n} \in S$ (north pole) and the geographic longitude of a fixed point ${\bf m} \in S$ (Greenwich). More precisely: given two points ${\bf n}, {\bf m} \in S$, ${\bf n} \ne {\bf m}$, there is a unique equilibrated spherical system $(\thet,\varphi)$ of coordinates on $S$, such that $\vartheta$ vanishes at ${\bf n}$ and $\varphi$ vanishes at ${\bf m}$.

Combining these observations with classical results (cf.~\cite{isocoord}), we obtain the following

\begin{theorem}\label{unique}
Let $S$ be a differential two-manifold, diffeomorphic to the two-sphere
$S^2\subset {\mathbb R}^3$ and equipped with a metric $g$ of class $C^{(k,\alpha)}$. For every pair ${\bf n}, {\bf m} \in S$, ${\bf n} \ne {\bf m}$, there is a unique equilibrated spherical system $(\thet,\varphi)$ of coordinates on $S$, such that $\vartheta$ vanishes at ${\bf n}$ and $\varphi$ vanishes at ${\bf m}$, and the metric components $g_{AB}$ are of the same class $C^{(k,\alpha)}$.
\end{theorem}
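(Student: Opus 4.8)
The plan is to reduce the statement to the (unnumbered) theorem just proved together with one classical regularity input. Existence and uniqueness up to rotations of an equilibrated system is exactly the content of the preceding theorem, and the Remark following it shows that prescribing the north pole $\mathbf{n}$ and the Greenwich point $\mathbf{m}$ (with $\mathbf{n}\ne\mathbf{m}$) removes precisely the residual rotational freedom. Hence the only genuinely new assertions are that these conclusions survive under the weakened hypothesis $g\in C^{(k,\alpha)}$ and that in the resulting equilibrated chart the metric components $g_{AB}$ are themselves of class $C^{(k,\alpha)}$. Both are regularity statements, and I would establish them by tracking the regularity through each step of the construction.

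First I would invoke the classical theory of isothermal coordinates (cf.~\cite{isocoord}): a metric $g$ of class $C^{(k,\alpha)}$ on the topological sphere $S$ admits conformally spherical coordinates $(\vartheta,\varphi)$, i.e.~coordinates in which $g_{AB}=\psi\,\sigma_{AB}$, with conformal factor $\psi$ of class $C^{(k,\alpha)}$ and with the defining chart diffeomorphism one degree smoother, $C^{(k+1,\alpha)}$. This is the H\"older-regularity statement for solutions of the associated Beltrami equation and is the ``classical result'' alluded to before the statement. In particular, already in this initial chart the components $g_{AB}=\psi\,\sigma_{AB}$ lie in $C^{(k,\alpha)}$ on the regular part $S\setminus\ell$.

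Next I would run the equilibration construction of the preceding theorem verbatim. Its existence part rests only on the continuity of the map $\mathcal{F}\colon \overline{K(0,1)}\to\overline{K(0,1)}$ of \eqref{calF} and on the boundary identity $\mathcal{F}|_{S^2}=\mathrm{id}$; the weakened hypothesis $g\in C^{(k,\alpha)}$ affects none of this, so a zero $\mathbf{N}_0$ of $\mathcal{F}$ exists and yields a proper conformal transformation $F_{\mathbf{n}_0,\tau_0}$ carrying the initial system into an equilibrated one, with uniqueness up to rotation as before. The crucial observation is that the coordinate change from the initial conformally spherical system to the equilibrated one is realized by $F_{\mathbf{n}_0,\tau_0}$ (composed, if needed, with a rotation fixing $\mathbf{n}$ and $\mathbf{m}$), which is an element of the conformal group of the round $S^2$ and hence a real-analytic, in particular $C^\infty$, diffeomorphism of the sphere.

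Finally, since the pullback of a $C^{(k,\alpha)}$ tensor field by a $C^\infty$ diffeomorphism is again of class $C^{(k,\alpha)}$, the metric components in the equilibrated chart inherit the class $C^{(k,\alpha)}$ from the first step, which is the regularity assertion; prescribing $\mathbf{n}$ and $\mathbf{m}$ then fixes the unique representative by the Remark. The main obstacle I anticipate is purely in the first step: one must cite \cite{isocoord} at the sharp regularity, namely that the conformal factor $\psi$ does not lose a derivative relative to $g$, so that the class $C^{(k,\alpha)}$ is genuinely preserved rather than degraded. A secondary point requiring care is the interpretation of ``$C^{(k,\alpha)}$ components'' on the coordinate domain $S\setminus\ell$: the usual spherical-coordinate degeneracy at the poles $\mathbf{n},\mathbf{m}$ is harmless here, because the conformal maps $F_{\mathbf{n},\tau}$ and the rotations are smooth across those poles and therefore do not spoil the H\"older class in the interior of the chart.
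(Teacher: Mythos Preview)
Your proposal is correct and matches the paper's own approach: the paper does not give a separate proof of this theorem but simply states it as an immediate consequence of the preceding existence/uniqueness theorem, the remark on fixing $\mathbf{n}$ and $\mathbf{m}$, and the classical regularity results in \cite{isocoord}. Your write-up is in fact a faithful and more explicit elaboration of that one-line sketch; the only minor slip is calling $\mathbf{m}$ a ``pole'' near the end, when in the paper it is an equatorial (Greenwich) point, but this does not affect the argument.
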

\noindent Here, $C^{(k,\alpha)}$ is a
 H{\"o}lder space $C^{k, \alpha} (S^2)$, defined for  $1 \leqslant k \in {\mathbb N}$ and $0 < \alpha  < 1$. The space consists of those functions on $S^2$ which have continuous derivatives up to order $k$ and such that the $k$-th partial derivatives are H{\"o}lder continuous with exponent $\alpha$. This is a locally convex topological vector space.

 The H{\"o}lder coefficient of a function $f$ is defined as follows:
   \[ | f |_{C^{0,\alpha}} = \sup_{x,y \in S^2 ,\, x \neq y} \frac{| f(x) - f(y) |}{|x-y|^\alpha} \ . \]
The function $f$ is said to be (uniformly) H{\"o}lder continuous with exponent $\alpha$ if $| f |_{C^{0,\alpha}}$ is finite. In this case the H{\"o}lder coefficient can be used as a seminorm.

The H{\"o}lder space $C^{k,\alpha}(S^2)$ is composed  of functions whose derivatives up to order $k$ are bounded and the derivatives of the order $k$ are H{\"o}lder continuous. It is a Banach space equipped with the norm
 \[   \| f \|_{C^{k, \alpha}} = \|f\|_{C^k}+\max_{| \beta | = k} | D^\beta f |_{C^{0,\alpha}} \, , \]
where $\beta$ ranges over multi-indices and
 \[   \|f\|_{C^k} = \max_{| \beta | \leq k} \, \sup_{x\in S^2} |D^\beta f (x)| \, . \]

\subsection{Rigid spheres in a Riemannian three-space}

Given a manifold $S$ equipped with a metric tensor $g$, there is a
three-dimensional space of ``linear functions'' uniquely defined
on $S$ as linear combinations of
functions (\ref{x}--\ref{z}), calculated in any equilibrated spherical
system of coordinates $(\thet,\varphi)$. We denote this space by
${\cal M}^3$. By ${\cal M}^4$ we denote the space spanned by
${\cal M}^3$ and the constant functions on $S$.
Linear functions (\ref{x}--\ref{z}) on ${S}$
are eigenfunctions of the Laplace operator\footnote{By $\Delta_{\sigma}$ we denote the usual Laplace operator for the unit-sphere metric (\ref{eta}).}  $\Delta_{\sigma}$, with
the eigenvalue equal to $-2$, i.e.
  $\Delta_{\sigma} X^i = -2 X^i$,
where we denote $x=X^1$, $y=X^2$, $z=X^3$.
Let us denote by $\mbox{\rm d}\sigma:=\sin\thet\, {\rm d}\thet \, {\rm d}\varphi$
the measure associated with the metric $\sigma_{AB}$.
\begin{definition}
\label{Def0} Let $f \in L^2(S,\mbox{\rm d}\sigma)$. The projection of
$f$ onto the subspace of constant functions:
\begin{equation}\label{mon}
    P_m(f) := \frac{1}{4\pi}\int_{{ S}}
  f \, \mbox{\rm d}\sigma
\end{equation}
will be called the {\em monopole part} of $f$, whereas the
projection onto \\  ${\cal M}^3 = \lin \{ X^1,X^2,X^3 \}$:
\begin{equation} \label{eq1c}
  P_d(f) := \sum_{i=1}^3 \left(X^i {\int_{{ S}} X^i f
  \, \mbox{\rm d}\sigma \over \int_{{ S}} (X^i)^2 \, \mbox{\rm d}\sigma }\right)
\end{equation}
will be called the {\em dipole part} of $f$. In addition, we set
\begin{equation}
\label{eq2c} {\cal M}^4 := \lin \{ 1 \}
 \oplus {\cal M}^3
 = \lin \{1,X^1, X^2 ,X^3  \} \, ,
\end{equation}
and  $P_{md}(f):= P_{m}(f) + P_{d}(f) \in {\cal M}^4$
denotes the {\em mono-dipole part} of  $f$.
\end{definition}

The above structure enables us to define the multipole
decomposition of the functions defined on a topological sphere $S$
 in terms of eigenspaces of the Laplace operator associated with
the metric $\sigma_{AB}$. If $h$ is a function on $S$, then by
$h^\bm:= P_m(h)$ we denote its monopole (constant) part, by
$h^\bd:=P_d(h)$ --- the dipole part (projection to the eigenspace of the Laplacian
with eigenvalue $-2$). By $h^{\bw} := (I-P_{md})(h)= h -h^\bm- h^\bd$ we
denote the ``wave'', or mono-dipole-free, part of $h$, $h^{\bd\bw}
:= (I-P_{m})(h)= h -h^\bm= h^{\bd}+h^{\bw}$,
 and finally $h^{\bmd}:=P_{md}(h)= h^\bm + h^\bd$.

\begin{remark}
Mutually orthogonal projectors $P_{md}$ and $P_w :=(I-P_{md})$ are, of course, continuous, when considered as operators in the Hilbert space $L^2(S,\mbox{\rm d}\sigma)$. For our purposes we have to consider them as operators in the Banach space $C^{(k,\alpha)}$. Here, no ``orthogonality'' is defined. Nevertheless, both operators are again continuous projectors. They define an isomorphism:
\[
   C^{(k,\alpha)} \cong C_{md}^{(k,\alpha)} \times C_w^{(k,\alpha)} \ ,
\]
where $C_{md}^{(k,\alpha)} = P_{md} (C^{(k,\alpha)})\equiv {\cal M}^4$ and $C_{w}^{(k,\alpha)} = P_{w} (C^{(k,\alpha)})$. Hence, a function $f \in C^{(k,\alpha)}$ is uniquely characterized by its mono-dipole part $f^{\bm\bd}$ and the remaining ``wave'' part $f^\bw$, i.e. we have: $f = (f^{\bm\bd},f^{\bw})$.
\end{remark}

 \begin{definition}
Let $\Sigma$ be a Riemannian three-manifold
and let $S \subset \Sigma$ be a submanifold homeomorphic with $S^2 \subset {\mathbb R}^3$. We say that $S$ is a \emph{rigid sphere} if its mean extrinsic curvature $k$ satisfies $k \in {\cal M}^4$, i.e. if the following equation holds:
\begin{equation}\label{rig}
    k^{\bw} = 0 \ .
\end{equation}

\end{definition}

\subsection{The 4-D spacetime case -- an outline}

Definition of a rigid sphere in a Lorenzian four-manifold is more complicated: to control ``rigidity'' of a sphere, we must take into account more geometry.
For this purpose we consider the extrinsic curvature vector of $S$: $k^a = k^a_{AB} {g}^{AB}$, where $k^a_{AB}$ denotes the external curvature tensor of $S$ (here, $a,b$ are indices corresponding to the
subspace orthogonal to $S$ whereas $A,B$ label coordinates on $S$). Moreover, we consider its torsion:
\begin{equation}\label{ell}
    \ell_A = ({\bf m} | \nabla_A {\bf n}) \ ,
\end{equation}
where
\begin{equation}\label{n}
    {\bf n} := \frac {\bf k}{\| {\bf k} \| } \ ,
\end{equation}
${\| {\bf k} \| } = \sqrt{k^a g_{ab} k^b}$,
and ${\bf m}$ is a vector orthogonal to both ${\bf k}$ and $S$.

\begin{definition}
Let $M$ be a Lorenzian four-manifold (a generic
curved spacetime) and let $S \subset M$ be a {\em spacelike} submanifold homeomorphic with $S^2 \subset {\mathbb R}^3$. We say that $S$ is a
\emph{rigid sphere} if ${\bf k}=(k^a)$ is spacelike and the following
two conditions are satisfied:
\begin{eqnarray}
  {\| {\bf k} \| } &\in& {\cal M}^4 \ , \\
  \nabla_A \ell ^A &\in& {\cal M}^3 \ .
\end{eqnarray}

\end{definition}

In this paper we limit ourselves to the purely Riemannian 3D-setting.
The general, pseudo-riemannian case will be analyzed in a
subsequent paper.\\

\begin{example}
{\it Rigid spheres in a four-dimensional Minkowski
spacetime and in Euclidean three-space.}
\\
Let ${ M}_0$ be the flat Minkowski spacetime, i.e.~the space
${\mathbb R}^4$ parameterized by the Lorentzian coordinates
$(x^\alpha)=(x^0, \ldots ,x^3)$ and equipped with the metric $\eta
= (\eta_{\alpha \beta}) = {\rm diag} (-1,1,1,1)$ (Greek indices
run always from 0 to 3). \\
Consider in ${ M}_0$ a {\em round sphere}, i.e.~the
two-dimensional submanifold defined by
\[
  S_{T,R} := \Big\lbrace x \in {\mathbb R}^4 \
  \Bigl| \ x^0 = T \ , \ \sum_{i=1}^3 (x^i)^2 = R^2 \Big\rbrace
   \ ,
\]
where the time $T\in {\mathbb R}$ and the sphere's radius $R>0$ are fixed.
It may be easily verified that the
submanifold fulfills the following conditions:
\begin{eqnarray}
  \sqrt{k^a g_{ab} k^b} = \tfrac {2}{R} &\in& {\cal M}^4 \ , \\
  \nabla_A \ell ^A =0 &\in& {\cal M}^3 \ ,
\end{eqnarray}
hence each round sphere $S_{T,R}$ in Minkowski spacetime ${ M}_0$
is a rigid sphere.
Using Poincar\'e symmetry group of ${ M}_0$, it is easy to check that there is an
8-parameter family of such spheres. Indeed, fixing the value of $R$,
a 7-parameter family remains left. All of them may be obtained
from a single sphere, say $S_{0,R}$, by the action of the
10-parameter Poincar\'e group. Because the three-parameter subgroup of rotations corresponds to internal symmetries of $S_{0,R}$, we are left with $7$ parameters only. The parameter $R$ corresponds to the dilation group.
Hence, we have 8 ($=10-3+1$) parameters.

In Euclidean three-space (represented by a slice $\{ x^0 = 0\}$ in ${M}_0$)
the family of rigid spheres reduces to four-parameter family of such spheres,
where $4=3+1$ -- three translations plus dilation
(or similarity transformations minus rotations $4=7-3$).
Each round sphere in Euclidean three-space is a rigid sphere because
its mean extrinsic curvature $k=-\tfrac {2}{R} \in {\cal M}^4$.
\end{example}

\section{Existence of rigid spheres in a Riemannian space}

Let $\Sigma$ be a three-dimensional Riemannian manifold. Let
$S\subset \Sigma$ be a two-manifold diffeomorphic to the unit
sphere $S^2\subset \real^3$.
We consider the following problems:  1) Can we deform $S$ in such a
way that the resulting submanifold becomes a rigid sphere? 2) How
many of such deformations exist in a vicinity of $S$?

To parameterize these deformations we introduce in a neighbourhood
of $S$ a Gaussian system of coordinates $(u,x^A)$. Here, by
$(x^A)$, $A=1,2$, we denote any coordinate system on $S$, whereas
$u$ is the arc-length parameter along the ``$\{ x^A=\const.\}$'' geodesics
starting orthogonally from $S$. The three-metric takes, therefore,
the form
    \be\label{Gcoord}
    g = \dd u^2 + g_{AB}(u,
    x^A)\dd x^A \dd x^B\;.
    \ee

Suppose, moreover, that coordinates $(x^A)=(\thet, \vphi)$ are conformal and equilibrated on $S$. This means that we have
    \be
    \zg_{AB}\dd x^A \dd x^B =
    \conff\cdot(\sigma_{AB}\dd x^A \dd x^B ) \; ,
    \ee
where
    \be
    \zg_{AB} := g_{AB}(0,x^A)
    \ee
is the induced two-metric on $S$, $\sigma$ is the ``round''
two-metric on the Euclidean unit sphere:
\be \sigma_{AB}\dd x^A \dd x^B =
    \dd \thet^2 + \sin^2\thet \dd\vphi^2\ ,
    \ee
and the function $\conff$ is dipole-free ($\conff^{\bd} = 0$).
Second fundamental form of $S$ is given by:
    \be
    \zk_{AB} = -\frac12 g_{AB,u}\ .
    \ee
Its trace does not need to belong to the space ${\cal M}^4$ of
mono-dipole-like functions, i.e.~the surface $S$ does not need to
be a rigid sphere. We are looking for such deformations of $S$,
for which the resulting surface fulfills already the rigidity
condition.

Any deformation of $S$ which is sufficiently small may be uniquely
parameterized by a function $\tau=\tau(x^A)$, such that the deformed
surface $S_\tau$ is given by:
    \bel{s-xi}
    S_\tau = \{ (u,x^A)\,|\,
    u=\tau(x^A)\}\, .
    \ee
The surface $S_\tau$  carries the induced metric:
\be\label{gonSxi}
    g \vert_{S_\tau} = \left[\dd \tau(x^A)\right]^2 + g_{AB}\left(\tau(x^C),
    x^C\right)\dd x^A \dd x^B
    =: g_{AB}\left(x^C\right)\dd x^A \dd x^B\, ,
    \ee
where
\be\label{explicite-g}
   g_{AB}\left(x^C\right) = (\partial_A \tau )( \partial_B \tau ) +
   g_{AB}\left(\tau(x^C),
    x^C\right) \, .
\ee
Here, we use the same coordinate system $(x^A)$, which was previously used for $S$. However, these coordinates do not need to be neither conformally spherical nor equilibrated. To verify that the deformation $\tau$
was successful, i.e.~that $S_\tau$ is a rigid sphere, we have to pass to an equilibrated system of spherical coordinates, say
$\wx^A$,  on $S_\tau$. To make this choice unique, we use the north pole: $\bf{n}:=\{ \thet = 0\}$, and the ``Gulf of Guinea'':  ${\bf m}:= \{ \thet = \frac \pi 2 ; \varphi = 0 \}$ to get rid of the rotation non-uniqueness (cf. Theorem \ref{unique}). This way we obtain an equilibrated version $\wg_{AB}$ of the metric \eqref{explicite-g}. Finally, we calculate the extrinsic curvature $k$ and check whether or not its wave part $k^{\bw}(S_\tau)$ satisfies condition $k^{\bw}(S_\tau)= 0$.

The idea of our paper may, therefore, be sketched as follows. We begin with a metric \eqref{Gcoord} which is of the class $C^{(k,\alpha)}$. The above construction defines a continuous mapping:
\begin{equation}\label{mapping-F}
    C_{md}^{(k+1,\alpha)} \times C_w^{(k+1,\alpha)} \ni (\tau^{\bm\bd},\tau^{\bw} ) = \tau \longrightarrow F(\tau):= k^{\bw} \in C_w^{(k-1,\alpha)} \, .
\end{equation}
Indeed, the resulting metric in a neighbourhood of $S_\tau$ is obtained from $g$ and the first derivatives of $\tau$. The function $\tau$ being of the class $C^{(k+1,\alpha)}$, the metric obtained this way is again of the class $C^{(k,\alpha)}$. Due to Theorem \ref{unique}, its equilibrated version $\wg_{AB}$ is again of the same class. Finally, the extrinsic curvature $k$ is obtained, using first derivatives of this metric. Hence, the result is of the class $C^{(k-1,\alpha)}$ and the entire procedure is continuous.

Now, rigid spheres are those, which satisfy equation:
\begin{equation}\label{implicit}
    F(\tau) = 0 \ .
\end{equation}
We are going to prove that, for a generic metric $g$, which is sufficiently close to the flat metric,  the above equation defines an implicit function:
\begin{equation}\label{mapping-H}
   {\cal M}^4 \equiv C_{md}^{(k+1,\alpha)} \ni \tau^{\bm\bd} \longrightarrow H(\tau^{\bm\bd})
    \in  C_w^{(k+1,\alpha)}  \ ,
\end{equation}
such that
\begin{equation}\label{tozsamosc}
    F ( \tau^{\bm\bd} , H(\tau^{\bm\bd}) ) \equiv 0 \ ,
\end{equation}
or, equivalently, that $S_{( \tau^{\bm\bd} , H(\tau^{\bm\bd}) )}$ is a rigid sphere. The main result of our paper follows
as: 
\begin{theorem}\label{main}
 Generically (i.~e.~if the metric is sufficiently close to the flat metric, e.~g.~in the external region of an asymptotically flat space) there exists a four-parameter family of rigid spheres in a neighbourhood of a given two-sphere $S \subset \Sigma$, corresponding to the four-parameter family of mono-dipole functions $\tau^{\bm\bd}$ on $S$.
\end{theorem}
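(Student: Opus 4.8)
The plan is to read equation \eqref{implicit} as a nonlinear elliptic equation for the normal deformation $\tau$ and to solve it by the Implicit Function Theorem in the Banach spaces $C^{(k,\alpha)}$. The unknown splits as $\tau=(\tau^{\bm\bd},\tau^{\bw})$, and I want to solve for the wave component $\tau^{\bw}$ in terms of the free mono-dipole data $\tau^{\bm\bd}$, which is exactly the content of \eqref{mapping-H}--\eqref{tozsamosc}. The natural base point is the flat metric together with a round sphere $S$: there $\tau=0$ gives $k\equiv 2/R\in{\cal M}^4$, hence $F(0)=0$, and the standard spherical coordinates are already equilibrated. The four free parameters $\tau^{\bm\bd}\in{\cal M}^4$ then correspond, in the flat case, exactly to the radius (monopole) and the three translations (dipole) of the round sphere, reproducing the known four-parameter family.

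The heart of the argument is the computation and inversion of the partial derivative $D_{\tau^{\bw}}F$ at this base point. First I would note that in \eqref{explicite-g} the term $(\partial_A\tau)(\partial_B\tau)$ is quadratic and drops out of the linearization, so the first-order change of the induced metric is the normal variation $\tau\,\partial_u g_{AB}=-2\tau\,\zk_{AB}$. The linearization of the mean-curvature operator under a normal graph is the classical Jacobi (stability) operator $L\tau=\Delta_\sigma\tau+(|\zk|^2+\mathrm{Ric}(\nu,\nu))\tau$, which on the unit round sphere in flat space reduces to $L=\Delta_\sigma+2$. Since $L$ commutes with $\Delta_\sigma$ it preserves the eigenspace decomposition, so after projection onto the wave part $D_{\tau^{\bw}}F$ acts on the $\ell\ge 2$ modes with eigenvalues $-\ell(\ell+1)+2=-(\ell-1)(\ell+2)$, all nonzero and bounded away from zero. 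Crucially, the projection $P_w$ removes all of ${\cal M}^4$: the monopole ($\ell=0$, eigenvalue $2$, left free as the radius) and the dipole ($\ell=1$, eigenvalue $0$, the infinitesimal translations spanning $\ker L={\cal M}^3$). This is precisely why the four-parameter freedom carried by $\tau^{\bm\bd}$ matches the kernel structure of $L$. Invoking the spectral properties of $\Delta_\sigma$ collected in the Appendix together with Schauder estimates, $D_{\tau^{\bw}}F$ is then an isomorphism $C_w^{(k+1,\alpha)}\to C_w^{(k-1,\alpha)}$.

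Before the Implicit Function Theorem can be applied, I must check that $F$ in \eqref{mapping-F} is a $C^1$ map of Banach spaces. The passage from $g$ to the induced metric \eqref{explicite-g} and then to $k$ loses two derivatives and is manifestly smooth in $\tau$; the only delicate sub-step is the re-equilibration of the spherical coordinates on $S_\tau$, whose choice depends implicitly on the deformed metric. Here I would use Theorem \ref{unique}, which guarantees not only existence and uniqueness of the equilibrated system but (with some additional work) its differentiable dependence on the metric within $C^{(k,\alpha)}$, so that $P_w$ and $P_{md}$ are well defined along the deformation and the whole composition is $C^1$. At the base point the round sphere is already equilibrated and, since a wave deformation does not move the barycenter \eqref{X} to first order (the product of an $\ell=1$ function with the $\ell\ge2$ first-order change of the area element integrates to zero by orthogonality of spherical harmonics), the equilibration step contributes nothing to $D_{\tau^{\bw}}F$, confirming the clean form found above.

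Finally, since invertibility of $D_{\tau^{\bw}}F$ is an open condition and $F$ depends continuously on the background metric $g$, the same derivative remains an isomorphism for every metric sufficiently close to the flat one. Treating $g$ as an additional parameter and applying the Implicit Function Theorem at the base point $(g_{\mathrm{flat}},\tau=0)$ yields a $C^1$ solution operator $\tau^{\bm\bd}\mapsto H(\tau^{\bm\bd})$ satisfying \eqref{tozsamosc} for all such $g$; letting $\tau^{\bm\bd}$ range over a neighbourhood in the four-dimensional space ${\cal M}^4$ produces the asserted four-parameter family of rigid spheres $S_{(\tau^{\bm\bd},H(\tau^{\bm\bd}))}$. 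I expect the main obstacle to be precisely the differentiability and linearization of the equilibration step, together with the careful matching of the projections $P_{md}/P_w$ with the kernel and cokernel of the Jacobi operator; once these are secured, the spectral gap on the wave modes makes the inversion and the application of the Implicit Function Theorem routine.
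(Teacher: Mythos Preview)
Your proposal is correct and follows essentially the same route as the paper: set up $F$ as in \eqref{mapping-F}, compute its linearization in the $\tau^{\bw}$-direction at the flat base point as the Jacobi operator $\Delta_\sigma+2$ restricted to wave modes, observe this is an isomorphism $C_w^{(k+1,\alpha)}\to C_w^{(k-1,\alpha)}$, and conclude by the Implicit Function Theorem together with openness of invertibility.

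The one place where your exposition is thinner than the paper's is the re-equilibration step. The paper does not merely argue that the equilibration contributes nothing at the base point; it explicitly constructs the tangential compensating field $\xi^A$ for a \emph{general} metric (its wave part fixed by the conformal-sphericality condition \eqref{i12}, its dipole part by the equilibration and the two marked points), and carries the resulting term $-\xi^A\zk_{,A}$ into the full linearized operator \eqref{xitok}. Only then does it specialize to the flat round sphere, where $\zka_{AB}=0$ and $\zk_{,A}=0$ kill this term and recover $\frac{1}{r^2}(\Delta_\sigma+2)$. Your orthogonality argument for the barycenter is correct but addresses only equilibration, not the restoration of conformal sphericality; you should also note (as the paper effectively does) that for the round sphere the traceless part $\zka_{AB}$ vanishes, so the deformed metric stays conformally spherical to first order and no tangential correction is needed at the base point. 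With that addition your linearization is fully justified, and the rest of your argument matches the paper's.
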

\subsection{Infinitesimal deformations of spheres}
\label{sc:defsph}

To prove existence of the implicit function \eqref{tozsamosc} it is sufficient to show that, given a mono-dipole deformation $\tau^{\bm\bd}$,  the partial derivative of $F$ with respect to the ``wave-like'' deformation $\tau^{\bw}$ is an isomorphism of $C_w^{(k+1,\alpha)}$ onto $C_w^{(k-1,\alpha)}$.

For this purpose, we analyze the infinitesimal, linear version of the construction discussed above. Consider, therefore, a transversal deformation $\tau=\tau(x^A)$ of $S \subset \Sigma$ and a small deformation parameter $\eps$:
    \bel{sxi}
    S\rightsquigarrow S_\tau = \{ (u,x^A)\,|\,
    u=\eps\tau(x^A)\}\;.
    \ee
Under such transformation the induced metric
changes in the following way:
    \be\label{g-g0}
    g_{AB} - \zg_{AB} = - 2\eps\tau\zk_{AB}+O(\eps^2)\;.
    \ee
Even if the initial system of coordinates was equilibrated, the
transformed metric does not need to be conformally spherical. The
non-sphericality of the metric must be, therefore, compensated by
a change of coordinates. Its infinitesimal version is described by a tangential (with respect to $S$) deformation
    \be\label{xtox}
    \wx^A = x^A - \eps\xi^A\;.
    \ee
Under such coordinate transformation the metric
changes as follows:
 \be
  \wg_{AB} = g_{AB} - \pounds_{2\eps {\vec \xi}}
  \ g_{AB} \;,
 \ee
where the last term represents the Lie derivative of the metric
$g_{AB}$ with respect to the vector field ``$- \eps\xi^A$'' on $S$. But,
according to \eqref{g-g0}, the difference between $g_{AB}$ and
$\zg_{AB}$ is already of the first order in $\eps$. Hence, if we
replace it by the Lie derivative of the metric $\zg_{AB}$, the
error will be of the second order in $\eps$. Using the Killing
formula for the Lie derivative of the metric, we finally obtain:
 \be
  \wg_{AB} = g_{AB} + 2\eps \xi_{(A||B)} +O(\eps^2) \;,
 \ee
and the covariant derivative ${}_{||A}$ is taken with respect to
the original metric $\zg_{AB}$. Hence, we have:
 \be
 \wg_{AB} -
 \zg_{AB} = -2\eps\tau\zk_{AB} +
 2\eps\xi_{(A||B)} +O(\eps^2)\;.
 \ee
Let us decompose the above equation into the trace and the
trace-free parts, calculated with respect to $\zg_{AB}$ (we omit
the terms of order $\eps^2$ and higher):
    \be  \wg_{AB}-\zg_{AB} =
    \left( \eps \xi^C{}_{||C}-\eps\tau\zk \right)\zg_{AB} -
    2\eps\tau\zka_{AB} + 2 \eps \left( \xi_{(A||B)} - \frac12
    \xi^C{}_{||C}\zg_{AB}\right)\;,
    \ee
where
 \be
 \zka_{AB} := \zk_{AB} - \frac12 \zk
 \zg_{AB}
 \ee
is the traceless part of $\zk_{AB}$. We want $\wg_{AB}$ to be
conformally spherical, i.e. $\wg_{AB} = \alpha \cdot \zg_{AB}$.
This implies:
    \be
    \left(1-\eps\tau\zk - \alpha + \eps \xi^C{}_{||C}\right)\zg_{AB}
    - 2\eps\tau\zka_{AB} + 2 \eps \xi_{(A||B)} - \eps
    \xi^C{}_{||C}\zg_{AB} = 0\;.
    \ee
The trace part of this equation defines uniquely the value of
$\alpha$:
    \bel{alpha1}
    \alpha = 1-\eps\tau\zk  + \eps \xi^C{}_{||C} \ ,
    \ee
whereas the trace-free part reduces to:
    \bel{rxiA0}
    \xi_{A||B} + \xi_{B||A} -
    \xi^C{}_{||C}\zg_{AB} = 2 \tau \zka_{AB}\;.
    \ee

It is convenient to rewrite equation \eq{rxiA0} in terms of the
``round'' unit-sphere geometry~$\sigma_{AB}$. For this purpose we
use the following conventions: components of a vector (i.e.~an
object having {\em upper indices}) are the same in both geometries
$\sigma_{AB}$ and $\zg_{AB}= \psi \sigma_{AB}$. Components of a
co-vector ({\em lowered indices}) are denoted as follows:
    \be
    \xi^\sigma_A = \sigma_{AB}
    \xi^B \ \  , \ \ \
    \xi_A = \zg_{AB} \xi^B = \psi \sigma_{AB} \xi^B = \psi \xi^\sigma_A \, .
    \ee
The covariant derivative with respect
to $\sigma_{AB}$ will be denoted by ${}_{\der A}$,
e.g.
    $\xi^\sigma_{A\der B}$. 
Equation \eq{rxiA0} can be easily rewritten as:
    \bel{i12}
     \xi^\sigma_{A\der B}
    + \xi^\sigma_{B\der A} - \xi^C{}_{\der C}\sigma_{AB}
     = 2 \frac \tau\conff \zka_{AB}\;.
    \ee
The left-hand side of this equation defines a
mapping from the space of vector fields on the unit sphere to the
space of trace-free rank 2 tensor fields. The kernel of this
mapping consists of the dipole fields\footnote{A vector field on the sphere may
be uniquely decomposed into the sum of a gradient and a co-gradient. These two components are represented by the corresponding two scalar functions: the
divergence and the curl. The multipole expansion of a vector field
is uniquely defined by the multipole expansion of these two
functions.}. The ``Fredholm alternative'' argument shows that the
operator on the left-hand side defines an isomorphism between the
space of dipole-free vector fields on the unit sphere and the
space of trace-free rank 2 tensor fields (see also \cite{peel:JJ}).
This isomorphism (in metric $\sigma$) will be denoted by $i_{12}$.
Hence, the wave part of $\xi^A$ is implied uniquely by equation
\eq{i12} (see Appendix):
    \be \label{wave-xi}
    \xi^{\bw A} = i_{12}^{-1}\left(2 \frac\tau\conff
    \zka_{AB}\right)\ ,
    \ee
whereas the dipole part of $\xi^A$, i.e.~the field $\xi^{\bd A}$,
remains arbitrary.

The above choice of the wave-like component of the tangential deformation $\xi^{\bw A}$ guarantees that the new
coordinate system $\wx^A$ is conformally spherical. We would like
it to be also: 1) equlibrated and 2) satisfying conditions related to the two fixed points ${\bf n}$ and ${\bf m}$. These conditions mean that the field $\xi$ has to vanish at the north pole ${\bf n}$ and that its $\varphi$-component vanishes at ${\bf m}$. The above $3 + 3 = 6$ conditions fix uniquely the total dipole-part of the tangential (to $S$) deformation $\xi^A$. This way the continuous mapping which assigns uniquely the tangential deformation $\xi^A$ to its transversal component $\tau$ has been defined.

\subsection{The infinitesimal change of the extrinsic curvature}

Now, we are going to calculate the infinitesimal
change of the wave part $k^\bw$ of the mean curvature\footnote{First
variations of the {\em total} mean curvature $k$ is known in the literature as the second variations of area, cf. e.g. \cite{svararea}.~See also discussion in the Appendix.}, i.e.~derivative of the mapping \eqref{mapping-F} with respect to the ``wave-like'' deformation $\tau^{\bw}$.  We have $k=\gdwa^{AB} k_{AB}$, where
$\gdwa^{AB}$ denotes the inverse of the two-metric $g_{AB}$ (whereas $g^{AB}$ denotes the corresponding components of the inverse three-metric.) The simplest way to calculate this change is to use a coordinate system $(\omega, x^A)$,
adapted to the deformed surface:
    \be
    \omega = u - \eps\tau(x^A)\;,
    \qquad \mathrm{i.e.} \;\; S_\tau = \{\omega = 0\}\ ,
    \ee
and the formula:
    \be
    k_{AB} = \frac{1}{\sqrt{
    g^{\omega\omega}}} \Gamma^\omega{}_{AB}\;.
    \ee
The three-metric $g$ takes now the following form:
    \be
    g=
    \dd\omega^2 + 2\eps\tau_{,A} \dd\omega \dd x^A +
    g_{AB}\dd x^A \dd x^B + O(\eps^2)\;.
    \ee
This implies $g^{\omega\omega} = 1+O(\eps^2)$ and, consequently,
\be
    k_{AB} =  \Gamma_\omega{}_{AB}  + g^{\omega C} \Gamma_C{}_{AB} + O(\eps^2) =
    \frac 12 \left( g_{\omega A || B} + g_{\omega B || A} - g_{AB , \omega}\right) +
    O(\eps^2) \; ,
    \ee
where we treat the ``shift vector'' $g_{\omega A}=\eps\tau_{,A}$ as a covector field on $S_\tau$. The first two terms combine to $\eps\tau_{||AB}$, whereas the last one: $g_{AB , \omega} (S_\tau)$ can be approximated by the quantity $g_{AB , \omega} (S)= -2\zk_{AB}$ plus the derivative of this object. Finally,
 we have
    \be
    k_{AB}
    = \zk_{AB} +\eps\tau \zk_{AB,u} + \eps\tau_{||AB} +
    O(\eps^2)\, .
    \ee
Since the derivative $g_{AB , \omega}$ of the metric $g_{AB}$ is described by $-2\zk_{AB}$, the derivative of its inverse $\gdwa^{AB}$ is described by $+2\zk^{AB}$. Hence, we have:
\be\label{g-g0up}
    \gdwa^{AB} - \zg^{AB} =  2\eps\tau\zk^{AB}+O(\eps^2)\, ,
    \ee
and, consequently:
\be\label{varku} k =\gdwa^{AB} k_{AB}= \zk + \eps\tau \partial_u \zk
    +\eps\tau^{||A}{_A} + O(\eps^2)\, .
    \ee

The quantity  $\tau \partial_u \zk +\tau^{||A}{_A}$ describes already the second variation of area (see Appendix), i.e. the derivative $\nabla_\tau k$. However, to calculate the derivative of the mapping \eqref{mapping-F}, we have to select its wave part $k^\bw$. For this purpose we have to pass to the conformally spherical, equilibrated coordinates $\wx^A$, given by formula \eqref{xtox}. Infinitesimal change of the scalar function $k$ with respect to this deformation is given by formula:
\[ \wk=k-\eps\xi^A k_{,A}+O(\eps^2)\, .\] Hence, we get:
    \be
    \frac1\eps\left(\wk - \zk\right) =
    \tau \partial_u \zk +\tau^{||A}{_A}
     - \xi^A \zk_{,A}  +  O(\eps)\, ,
    \ee
or, equivalently (cf.~Appendix),
\be
    \frac1\eps\left(\wk - \zk\right) =
    \tau (R^u{}_u + \zk^{AB} \zk_{AB})+\tau^{||A}{_A}
     - \xi^A \zk_{,A}  +  O(\eps)\, ,
    \ee
where $R^u{}_u=R({\rm d} u,{\partial \over \partial u}) $ is the component of the Ricci tensor.

\subsection{Proof of the Theorem \ref{main}}

The last formula gives, finally, the value of the derivative of the mapping \eqref{mapping-F}. When restricted to the subspace of wave (i.e.~mono-dipole-free) deformations, it gives us:
 \bel{xitok}
   C_w^{(k+1,\alpha)} \ni \tau \mapsto \left[\tau \left(R^u{}_u + \zk^{AB}
    \zk_{AB}\right) + \tau^{||A}{_A}
     -\,  \xi^A \zk_{,A}\right]^{\bw} \in C_w^{(k-1,\alpha)} \ .
    \ee
The above linear operator is, obviously, continuous. In particular, the vector field $\xi^A$ is given by formula \eqref{wave-xi}, together with the accompanying vanishing conditions at ${\bf n}$ and ${\bf m}$.

If the space $\Sigma$ is flat (Euclidean) and $S$ is a standard (rigid) sphere of radius $r$, then we have:
\begin{equation}\label{baza}
    \zg_{AB}=r^2\sigma_{AB} \, ; \quad \zk_{AB}=-r\sigma_{AB} \, ;
    \quad \zk_{,A}=0 \, ; \quad R^u{_u}=0 \, .
\end{equation}
Hence, the above operator reduces to:
\begin{eqnarray}  \nonumber
   \tau^{\bw} & \mapsto &  \left[\tau \left(R^u{}_u + \zk^{AB}
    \zk_{AB}\right) + \tau^{||A}{_A} -  \xi^A \zk_{,A}\right]^{\bw}=
     \frac1{r^2}\left[(\Delta_\sigma+2)\tau\right]^{\bw}  \\ & &  =\frac1{r^2}(\Delta_\sigma+2)(\tau^{\bw}) \ ,
    \label{xitoklw} \end{eqnarray}
which is obviously an invertible mapping from $C_w^{(k+1,\alpha)}$ to $C_w^{(k-1,\alpha)}$. But the mapping \eqref{xitoklw} depends in a continuous way upon the geometry (metric and curvature) of $S$. This implies that it remains invertible for sufficiently small deformations of the geometry.
This is the case e.g. of a sufficiently big ``coordinate sphere'' defined as follows:
\[
  S_{\vec{x}_0,R} := \Big\lbrace x \in {\mathbb R}^3 \
  \Bigl|\ \sum_{i=1}^3 (x^i - x_0^i)^2 = R^2 \Big\rbrace
   \ ,
\]
in an asymptotically flat $\Sigma$.

We say, that $\Sigma$ is asymptotically flat if there is a coordinate chart $(x^k)$ covering the exterior of a compact domain $D \subset \Sigma$ and such that
\[
  g_{kl} = \delta_{kl} + h_{kl} \ ,
\]
where $h$ vanishes sufficiently fast at infinity.
In that case $\Sigma \setminus D$ admits a four-parameter family of rigid spheres, similarly as in the case of the flat metric.

\section{Conclusions}
The main technical ingredient of this paper is the intrinsic, coordinate invariant definition of the ``multipole expansion'' of a function defined on a Riemannian two-manifold, diffeomorphic with $S^2$. This enables us to select a finite-dimensional family of ``rigid spheres''. The dipole part $k^\bd$ of the curvature parameterizes the position of the center of such a sphere with respect to the center of mass. In particular, $k^\bd= 0$ corresponds to the spheres, which are centered at the center of mass. Properties of such a foliation have been analyzed in \cite{LHH}. General topologically spherical coordinates, having property that surfaces $\{ r = $ const.$\}$ are rigid, do not admit {\em supertranslations} ambiguity at space infinity. This way symmetries of the ``tangent space at infinity'' reduce to a finite-dimensional one. The 4D version of our results, valid for a generic four-dimensional Lorenzian spacetime, which will be presented in the subsequent paper, will do the same job for the symmetry group of the Scri.

\section*{Acknowledgements}
This research was supported by Polish Ministry of Science and Higher
Education (grant Nr N N201 372736) and by Narodowe Centrum Nauki (grant DEC-2011/03/B/ST1/02625).
SŁ was supported by Foundation for Polish Science.

\appendix
\section{Appendix}

\subsection{The dipole part of traceless symmetric part}
 The kernel of the mapping
\[ \xi^\sigma_A \mapsto \xi^\sigma_{A\der B}
    + \xi^\sigma_{B\der A} - \sigma^{CD}\xi^\sigma_{C\der D}\sigma_{AB} \]
defined by the left-hand side of the formula (\ref{i12})
consists of the dipole fields. This is a simple consequence of the following
observations.
\begin{itemize}
\item In case of the unit sphere
the Hodge decomposition
$\xi=\dd\alpha+\delta\beta+h$ of the covector $\xi$ on a compact manifold
does not contain the harmonic part,
i.e. harmonic one-form $h$ vanishes ($\dd h=0=\delta h$ implies $h=0$).
The topology of the unit sphere
(triviality of the corresponding cohomology class)
 cancels the harmonic part and
we can always represent $\xi$ as follows
\begin{equation}\label{xiab}
 \xi_A = \alpha_{,A} + \varepsilon_A{^B}\beta_{,B} \, ,
\end{equation}
where functions $\alpha$ and $\beta$ are defined up to a constant
but their gradients are unique.
\item The purely dipole covector $\xi$ simply means that the potentials
$\alpha$ and $\beta$ are purely dipole functions:
$\alpha=a_iX^i$, $\beta=b_iX^i$, where $a_i$, $b_i$ are real constants.
\item Direct computation for dipole functions $X^i$ enables one to check
the following identity:
$X^i_{\der AB}=-X^i\sigma_{AB}$, hence for any dipole function $\alpha$
\begin{equation} \label{aab}
\alpha_{\der AB}=-\alpha\sigma_{AB} \, .
\end{equation}
\item Formulae (\ref{xiab}) and (\ref{aab}) 
give
\[ \xi_{A\der B}= -\alpha\sigma_{AB}-\beta\varepsilon_{AB} \, ,\]
hence the traceless symmetric part of $\xi_{A\der B}$ vanishes.
\end{itemize}
\subsection{The isomorphism between covector fields and symmetric traceless tensors on $(S^2,\, \sigma_{AB})$}
Let us consider the following diagram:
\[
\begin{array}{ccccccccc}
V^0_{k+2}\oplus V^0_{k+2} & \stackrel{i_{01}}{\longrightarrow} & V^1_{k+1} &
\stackrel{i_{12}}{\longrightarrow} & V^2_{k} &
\stackrel{i_{21}}{\longrightarrow} & V^1_{k-1}
& \stackrel{i_{10}}{\longrightarrow} & V^0_{k-2} \oplus V^0_{k-2} \\
\Big\downarrow\vcenter{\rlap{$\scriptstyle Fl$}} &  &
\Big\downarrow\vcenter{\rlap{$\hat{} $}} &  &
\Big\downarrow\vcenter{\rlap{$\hat{} $}} &  &
\Big\downarrow\vcenter{\rlap{$\hat{} $}} &  &
\Big\downarrow\vcenter{\rlap{$\scriptstyle Fl$}}  \\
V^0_{k+2}\oplus V^0_{k+2} & \stackrel{i_{01}}{\longrightarrow} & V^1_{k+1} &
\stackrel{i_{12}}{\longrightarrow} & V^2_{k} &
\stackrel{i_{21}}{\longrightarrow} & V^1_{k-1}
& \stackrel{i_{10}}{\longrightarrow} & V^0_{k-2} \oplus V^0_{k-2}
\end{array}
\]
where the mappings and the spaces are defined as follows:
\[ i_{01}(f,g)=f_{\der A}+\varepsilon_A{^B}g_{\der B} \, ,\]
\[ i_{12}(v)= v_{A\der B}+ v_{B\der A}-\sigma_{AB}v^C{_{\der C}} \, , \]
\[ i_{21}(\chi)= \chi_A{^B}{_{\der B}} \, ,\]
\[ i_{10}(v)=\left( v^A_{\der A}, \varepsilon^{AB}v_{A\der B} \right) \, ,\]
\[ Fl(f,g)=(g,f) \, , \quad {\hat v}_A=\varepsilon_A{^B}v_{B} \, , \quad
 {\hat\chi}_{AB}=\varepsilon_A{^C}\chi_{CB} \, ,\]
 \noindent
$V^0_k$ -- scalars on $S^2$ belonging to H\"older space $C^{k,\alpha}$ \, ,\\
 $V^1_k$ -- covectors on $S^2$ belonging to H\"older space $C^{k,\alpha}$\, ,\\
$V^2_k$ -- symmetric traceless tensors on $S^2$ belonging to H\"older space $C^{k,\alpha}$.\\
 Denote by $\dtwo$ the Laplace operator on $S^2$ and by $SH^l$ the space of
spherical harmonics of degree $l$, ($f\in SH^l \Longleftrightarrow
 {\dtwo}f= -l(l+1)f$).
The following equality
\[ i_{10} \circ i_{21} \circ i_{12} \circ i_{01} = {\dtwo}({\dtwo}+2) \]
shows that if we restrict ourselves to the spaces $\overline V^0:=V^0
\ominus [SH^0\oplus SH^1 ] = (I-P_{md})V^0$
 (${\dtwo}({\dtwo}+2)\overline V^0 =\overline V^0$) and
 $\overline V^1=V^1\ominus[i_{01}(SH^1)]$
  ($({\dtwo}+I)\overline V^1=\overline V^1$) then all the mappings in
the above diagram become isomorphisms.
\subsubsection{Integral operators, generalized Green's functions}
Solution of the Helmholtz equation on a unit sphere $S^2$:
\[ \left[\dtwo +l(l+1)\right] \Psi_l(n) = \Phi(n) \, , \quad n\in S^2\]
is given (see e.g. \cite{RS-JMP47}) in terms of the generalized Green's function ${\bar G}_l$ as follows:
\be\label{rH} \Psi_l(n) = \int_{S^2} {\bar G}_l(n,n')\Phi(n')\dd^2n' \, .\ee
Here $n={\mathbf D}(\thet,\varphi)$ given by (\ref{dipolD}) and $\dd^2n=\mbox{\rm d}\sigma$.
The solution $\Psi_l(n)$ is automatically orthogonal to the space $SH^l$ (the kernel of Helmholtz operator $\dtwo +l(l+1)$) because Green's function is orthogonal to this space.
In our case we need to write the inverse of the operator ${\dtwo}({\dtwo}+2)$ as a double integral with the corresponding kernels ${\bar G}_l$ for $l=0$ and $l=1$.
More precisely, the solution $g$ of the equation ${\dtwo}({\dtwo}+2) g = f$ (with $P_{md} f=0$) is given in the following form:
\begin{eqnarray}\label{godwr} g(n) & = & \nonumber P_w
 \int_{S^2} {\bar G}_0(n,n") \cdot \\ \nonumber & &  \cdot \left[ \int_{S^2}{\bar G}_1(n",n')f(n')\dd^2n'
- \frac1{4\pi} \int_{S^2\times S^2}{\bar G}_1(m,n')f(n')\dd^2n'\dd^2 m\right] \dd^2n" \\  & = &
\int_{S^2} {\bar G}_0(n,n") \left[ \int_{S^2}{\bar G}_1(n",n')f(n')\dd^2n'
\right] \dd^2n"
 \, , \end{eqnarray}
where the projection operator $P_w$ provides orthogonality\footnote{The above integral operators do not mix the wave part with the mono-dipole part of a function. This means that $P_w f =f$ implies
$P_w (G_1*f)=G_1*f$ and $P_w (G_0*f)=G_0*f$.} of $g$ to the space $SH^0\oplus SH^1$.
The generalized Green's function written in a standard form:
\[ {\bar G}_l(n,n') = \sum_{i=0, i\neq l}^{\infty}\sum_{m=-i}^{i} \frac{Y_{im}(n)\overline{Y_{im}(n')}}{l(l+1)-i(i+1)} \, , \quad Y_{im}\in SH^i  \, , \]
can be simplified as follows (cf. \cite{RS-JMP47}):
\[ {\bar G}_l(n,n') = \frac1{4\pi} P_l(n \cdot n')\left[ \ln\frac{1-n\cdot n'}{2} +c_l \right]
 + \frac1{2\pi} \sum_{i=0}^{l-1} \frac{2i+1}{(l-i)(l+i+1)}P_i(n\cdot n') \, , \]
\[ c_l:=\frac1{2l+1}-2 \sum_{i=0}^{l-1} (-1)^{l+i}\frac{2i+1}{(l-i)(l+i+1)} \, , \]
$Y_{im}$ -- spherical harmonics (orthonormal basis in $SH^i$),
$n \cdot n'\in [-1,1]$ is a scalar product of unit vectors in ${\mathbb R}^3$
 and $\displaystyle P_l(x):=\frac1{2^l l!}(x^2-1)^{(l)}$ is the Legendre polynomial.

 \subsection{Second variation of area}

The Gaussian coordinates (\ref{Gcoord}) and the definition of
the Riemann tensor gives
    \be
    R^u{}_{AuB}= \zk_{AB,u} + \zk_{A}{}^{C}\zk_{BC} \, .
    \ee
 This leads to
    \be
    k_{AB} = \zk_{AB} + \eps\tau (R^u{}_{AuB}  - \zk_{A}{}^{C}
    \zk_{BC}) + \eps\tau_{||AB} + O(\eps^2)\;.
    \ee
Taking the trace (and using (\ref{g-g0up})), we obtain:
    \be\label{varkR}
    k = \zk + \eps\tau (R^u{}_u + \zk^{AB}\zk_{AB})
    +\eps\tau^{||A}{_A} + O(\eps^2)\, .
    \ee
The formulae (\ref{varku}) and (\ref{varkR}) are equivalent
because of the Gauss-Codazzi equations:
\be 2\partial_u\zk = R(g_{kl}) - R(\zg_{AB}) + \zk^{AB}\zk_{AB}+\zk^2\, ,
\ee
\be 2R^u{}_u = R(g_{kl}) - R(\zg_{AB}) - \zk^{AB}\zk_{AB}+\zk^2\, ,
\ee
where $R(g_{kl})$ and $R(\zg_{AB})$ are scalar of curvatures of
the three-metric $g_{kl}$ and the two-metric $\zg_{AB}$, respectively.
Obviously (\ref{varku}), (\ref{varkR}) are the first
variations of the mean curvature $k$, which
 in the literature (see e.g. \cite{svararea})
  are known as the second variations of area.
They are usually presented in the following equivalent form:
\be\label{varkRR}
    k - \zk = \frac{\eps}2\left[  \left(R(g_{kl}) - R(\zg_{AB})
    + \zk^{AB}\zk_{AB}+\zk^2\right)\tau
    +2\tau^{||A}{_A}\right] + O(\eps^2)\, .
    \ee

\def\cprime{$'$}
\providecommand{\bysame}{\leavevmode\hbox
to3em{\hrulefill}\thinspace}
\providecommand{\MR}{\relax\ifhmode\unskip\space\fi MR }
\providecommand{\MRhref}[2]{%
  \href{http://www.ams.org/mathscinet-getitem?mr=#1}{#2}
} \providecommand{\href}[2]{#2}

\end{document}